\numberwithin{equation}{section}
\newcommand{\bea}{\begin{eqnarray}}
\newcommand{\eea}{\end{eqnarray}}
\newcommand{\be}{\begin{eqnarray*}}
\newcommand{\ee}{\end{eqnarray*}}
\newtheorem{theorem}{Theorem}[section]
\newtheorem{lemma}{Lemma}[section]
\newtheorem{corollary}{Corollary}[section]
\theoremstyle{remark}
\newtheorem{remark}{Remark}[section]
\begin{document}

\title{Toeplitz Block Matrices in Compressed Sensing}
\author{Florian Sebert, Leslie Ying, and Yi Ming Zou}
\thanks{F. Sebert and Y. M. Zou are with the Department of Mathematical Sciences, University of Wisconsin, Milwaukee, WI 53201, USA email: fmsebert@uwm.edu, ymzou@uwm.edu}\thanks{L. Ying is with the Department of Electrical Engineering, University of Wisconsin, Milwaukee, WI 53201, USA email: leiying@uwm.edu}
\date{01/10/2008}

\maketitle

\begin{abstract}
Recent work in compressed sensing theory shows that $n\times N$
independent and identically distributed (IID) sensing matrices whose
entries are drawn independently from certain probability
distributions guarantee exact recovery of a sparse signal with high
probability even if $n\ll N$. Motivated by signal processing
applications, random filtering with Toeplitz sensing matrices whose
elements are drawn from the same distributions were considered and
shown to also be sufficient to recover a sparse signal from reduced
samples exactly with high probability. This paper considers Toeplitz
block matrices as sensing matrices. They naturally arise in
multichannel and multidimensional filtering applications and include
Toeplitz matrices as special cases. It is shown that the probability
of exact reconstruction is also high. Their performance is validated
using simulations.
\end{abstract}

\section{Introduction}
The central problem in compressed sensing (CS) is
 the recovery of a
vector $x\in \mathbb{R}^N$ from its linear measurements $y$ of the
form \bea\label{linmeas} y_i =<x,\varphi_i>,\;1\le i\le n, \eea
where $n$ is assumed to be much smaller than $N$. Of course, for
$n\ll N$, (\ref{linmeas}) posts an under-determined system of
equations which has non-unique solutions. Exact recovery of the
original vector $x$ needs further prior information. The work by
Cand{\'{e}}s, Donoho, Romberg, Tao, and others (see e.g.
\cite{Cand1},\cite{Donoho}, and the references therein) showed that
under the assumption that $x$ is sparse, one can actually recover
$x$ from a sample $y$ which is much smaller in size than $x$ by
solving a convex program with a suitably chosen sampling basis
$\varphi_i, 1\le i\le n$. If we write the linear system
(\ref{linmeas}) in the form \bea y=\Phi x,\quad \mbox{where $\Phi$
is an $n\times N$ matrix}, \eea then the question about what
sampling methods guarantee the exact recovery of $x$ becomes the
question about what matrices are ``good'' compressed sensing
matrices, meaning that they ensure exact recovery of a sparse $x$
from $y$ with high probability under the condition that $n\ll N$.
\par
In \cite{Cand3} Cand\`{e}s and Tao introduce the \textit{restricted
isometry property} as a condition on matrices $\Phi$ which provides
a guarantee on the performance of $\Phi$ in compressed sensing.
\par
Following their definition, we say that a matrix
$\Phi\in\mathbb{R}^{n\times N}$ satisfies RIP of order
$m\in\mathbb{N}$ and constant $\delta_m\in(0,1)$ if
\begin{equation}\label{RIP1}
(1-\delta_m)\|z\|_2^2\le\|\Phi_T z\|_2^2\le(1+\delta_m)\|z\|_2^2
\qquad \forall z\in\mathbb{R}^{|T|},
\end{equation}
where $T\subset\{1,2,\ldots,N\}$, $|T|\le m$, and $\Phi_T$ denotes
the matrix obtained by retaining only the columns of $\Phi$
corresponding to the entries of $T$.
\par
It was shown in \cite{Cand3} (reinterpreted in \cite{Cohen}) that if
$\Phi$ satisfies RIP of order $3m$ and constant $\delta_{3m}\in(0,
1)$:
\begin{equation}\label{RIP}
 (1-\delta_{3m})\|z\|_2^2\le\|\Phi_T z\|_2^2\le(1+\delta_{3m})\|z\|_2^2 \qquad \forall
 z\in\mathbb{R}^{|T|},
\end{equation}
where $T\subset\{1,2,\ldots,N\}$ and $|T|\le 3m$, the decoder given
by
\begin{equation}
\triangle(y):=\text{argmin} \|x\|_{l_1^N} \ \ \ \ \ \text{ subject
to }\  \Phi x=y
\end{equation}
ensures exact recovery of $x$ from $y$.
\par
Recently Baraniuk et al \cite{Baraniuk} showed that matrices whose
entries are drawn independently from certain probability
distribution $P$ satisfy RIP of order $m$ with probability
$\ge1-e^{-c_2 n}$ for every $\delta_{m}\in(0,1)$ provided that $n\ge
c_1 m \ln(N/m)$, where $c_1,c_2>0$ are some positive constants
depending only on $\delta_{m}$. Motivated by applications in signal
processing, Bajwa et al \cite{Nowak} considered (truncated)
Toeplitz-structured matrices whose entries are drawn from the same
probability distributions $P$ and showed that they satisfy RIP of
order $3m$ with probability $\ge1-e^{-c_2 n/m^2}$ for every
$\delta_{3m}\in(0,1)$ provided that $n\ge c_1 m^3\ln(N/m)$.
\par
Some examples of probability distributions that can be used in this
context have been studied in \cite{Achl}. They include
\begin{equation*}
r_{i,j}\sim N\left(0,\frac{1}{n}\right),
\end{equation*}
\begin{equation}\label{eqn:distr}
r_{i,j}=\left\{
\begin{array}{ccc}
\displaystyle\frac{1}{\sqrt{n}}&\text{with probability} &1/2\\
-\displaystyle\frac{1}{\sqrt{n}}&\text{with probability} &1/2\\
\end{array}
\right.,
\end{equation}
\begin{equation*}
r_{i,j}=\left\{
\begin{array}{ccc}
\displaystyle\sqrt{\frac{3}{n}}&\text{with probability} &1/6\\
0 &\text{with probability} &2/3\\
-\displaystyle\sqrt{\frac{3}{n}}&\text{with probability} &1/6
\end{array}
\right..
\end{equation*}\\
\par
Motivated by applications in multichannel sampling, in this paper we
will consider Toeplitz block matrices with elements in each block
drawn independently from one of the probability distributions in
(\ref{eqn:distr}) and some other block matrices with similar
structures. We show that such matrices also satisfy RIP of order
$3m$ for every $\delta_{3m}\in(0,1)$ with high probability, provided
that $n\ge c_1 lm\ln(N/m)$, where $l\le 3m(3m-1)$ and $c_1>0$ is
some positive constant depending only on $\delta_{3m}$. These
Toeplitz block matrices naturally represent the system equation
matrices in multichannel sampling applications where a single input
signal is recovered from output samples of multiple channels with
IID random filters. The result justifies the use of multichannel
over single-channel systems in compressed sensing. The advantages of
Toeplitz matrices pointed out in \cite{Nowak}, like e.g. efficient
implementations, also apply to the matrices considered in this
paper.

\section{Main Result}

\begin{theorem}\label{thm: toepl}
For Toeplitz block matrices of the form
\begin{equation}
\label{eqn:TB} \Phi=\left( \begin{array}{ccccc}
\Phi_k \ \ & \Phi_{k-1} \ \ & \ldots \ \ & \Phi_2\ \ & \Phi_1 \\
\Phi_{k+1} \ \ & \Phi_{k} \ \ & \ldots \ \ & \Phi_3\ \ & \Phi_2 \\
\vdots \ \ & \vdots \ \ & \ddots \ \ & \ddots\ \ & \vdots \\
\Phi_{k+l-1} \ \ & \Phi_{k+l-2} \ \ & \ldots \ \ &\ldots\ \ & \Phi_l
\end{array} \right)\in \mathbb{R}^{n\times N}
\end{equation}
with blocks $\Phi_i\in\mathbb{R}^{d\times e}$ whose elements are
drawn independently from one of the probability distributions in
(\ref{eqn:distr}), there exist constants $c_1,c_2>0$ depending only
on $\delta_{3m}\in(0,1)$, such that:
\begin{enumerate}
 \item[(i)] If $l\le 3m(3m-1)$, then for any $n\ge c_1 lm \ln(N/m)$, $\Phi$ satisfies RIP of order $3m$ for every $\delta_{3m}\in(0,1)$ with probability at least
\begin{equation*}
1-e^{-c_2n/l}.
\end{equation*}
 \item[(ii)] If $l> 3m(3m-1)$, then for any $n\ge c_1 m^3\ln(N/m)$, $\Phi$ satisfies RIP of order $3m$ for every $\delta_{3m}\in(0,1)$ with probability at least
\begin{equation*}
1-e^{-c_2n/m^2}.
\end{equation*}
\end{enumerate}
\end{theorem}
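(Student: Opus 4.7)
The plan is to follow the overall architecture of Baraniuk et al.\ \cite{Baraniuk} and of Bajwa et al.\ \cite{Nowak}, extending the Toeplitz analysis of \cite{Nowak} to the block-Toeplitz case. Using $\|\Phi_T^{\top}\Phi_T - I\|_2 \le \delta_{3m}\Longrightarrow$ (\ref{RIP}), the task reduces to controlling the deviation of $\|\Phi_T v\|_2^2$ from $1$ uniformly over unit vectors $v\in\mathbb{R}^{|T|}$ and over every column subset $T$ with $|T|\le 3m$. For a fixed $(T,v)$ the goal is a tail bound of the form
\begin{equation*}
\Pr\bigl\{\,\bigl|\|\Phi_T v\|_2^2 - 1\bigr|>\delta_{3m}/2\,\bigr\}\le C e^{-cn/\alpha},\qquad \alpha:=\min\{l,\,3m(3m-1)\}.
\end{equation*}
An $\epsilon$-net of the unit sphere in $\mathbb{R}^{|T|}$, of cardinality $\le(C/\epsilon)^{3m}$, then extends the bound from a single $v$ to the whole sphere, and a union bound over the $\binom{N}{3m}\le(eN/3m)^{3m}$ choices of $T$ closes the argument; the hypotheses $n\ge c_1 lm\ln(N/m)$ and $n\ge c_1 m^3\ln(N/m)$ are tuned so that the exponent $cn/\alpha$ dominates the $O(m\ln(N/m))$ loss from these two union bounds.

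The work is in the single-$(T,v)$ concentration inequality, and this is where the block-Toeplitz structure enters. Expanding
\begin{equation*}
\|\Phi_T v\|_2^2=\sum_{i\in T} v_i^2\|\varphi_i\|_2^2+\sum_{\substack{i,j\in T\\ i\ne j}} v_i v_j \langle\varphi_i,\varphi_j\rangle
\end{equation*}
separates a diagonal part with mean $1$, whose concentration via Bernstein is standard since each $\|\varphi_i\|_2^2$ is a sum of $n$ i.i.d.\ terms of magnitude $1/n$, from a mean-zero cross part. The cross part is \emph{not} a sum of independent variables: the Toeplitz shifts in (\ref{eqn:TB}) cause the same i.i.d.\ block entry to contribute to several coordinates of different columns $\varphi_i$, introducing correlations that must be tracked carefully.

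To handle this, I would group the contributions to the cross part according to the block-row shift that produces them, exactly as in the decomposition of \cite{Nowak} but with the block structure taken into account. The key combinatorial input is that, among the at most $|T|(|T|-1)\le 3m(3m-1)$ ordered pairs in $T$, the number of distinct block-row shifts that can be realized in (\ref{eqn:TB}) is bounded by $\alpha=\min\{l,\,3m(3m-1)\}$; within each group the resulting sum is a genuine sum of independent products of distinct i.i.d.\ block entries, to which Bernstein's inequality applies and produces the exponent $cn/\alpha$. This yields the dichotomy in the statement: when $l\le 3m(3m-1)$ the bound $\alpha=l$ is active and the failure probability is $e^{-c_2 n/l}$, while for larger $l$ the combinatorial bound $\alpha\asymp m^2$ takes over and reproduces the $e^{-c_2 n/m^2}$ of the Toeplitz case of \cite{Nowak}.

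The principal obstacle is precisely the combinatorial bookkeeping in the previous paragraph: identifying the groups induced by the block-Toeplitz shift pattern of (\ref{eqn:TB}) and verifying that within each group the contributing pairs of block entries are disjoint, so that the Bernstein estimate delivers the stated exponent. Once this decomposition is in place, the remaining ingredients—Bernstein's inequality for bounded mean-zero variables, the $\epsilon$-net over the sphere, and the final union bound over $T$—are a routine adaptation of the arguments in \cite{Baraniuk} and \cite{Nowak}. The three distributions in (\ref{eqn:distr}) are all bounded with variance $1/n$, so a single Bernstein estimate covers them uniformly, and the resulting constants $c_1,c_2$ depend only on $\delta_{3m}$ through the Bernstein constants.
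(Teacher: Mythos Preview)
Your plan is a reasonable route to the theorem, but it is \emph{not} the one the paper takes. You propose to split $\|\Phi_T v\|_2^2$ into a diagonal part and a cross part, then control the cross part via Bernstein after grouping pairs $(i,j)\in T\times T$ by their block-shift. The paper never performs this split. Instead, for case~(i) it observes that each of the $l$ block-rows of $\Phi_T$ is itself a $d\times|T|$ matrix with i.i.d.\ entries (distinct blocks within a block-row are independent), rescales each by $\sqrt{l}$ so that its columns have unit expected squared norm, applies the Achlioptas/Baraniuk i.i.d.\ concentration bound (\ref{eqn:iidcase})--(\ref{eqn:f}) as a black box to each of the $l$ submatrices, and concludes via the convex-combination identity $\|\Phi_T z\|_2^2=\sum_{i=1}^l\frac{1}{l}\|\tilde\Phi_T^i z\|_2^2$. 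A union bound over the $l$ submatrices produces the $+\ln l$ in the exponent, and the final union bound over $T$ finishes. For case~(ii) the paper builds a row-dependency graph, shows its maximum degree is at most $|T|(|T|-1)$ (Lemma~\ref{lemma:max_dep}), invokes the Hajnal--Szemer\'edi equitable-coloring theorem to partition the $n$ rows into $q=|T|(|T|-1)+1$ colour classes of size $\lfloor n/q\rfloor$ or $\lceil n/q\rceil$, notes that each class is again an i.i.d.\ submatrix, and repeats the convex-combination argument with $q$ in place of $l$.

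The paper's approach is shorter because it never opens up $\|\Phi_T v\|_2^2$; the $\epsilon$-net and the moment estimates are all hidden inside the cited i.i.d.\ result, and the only new combinatorics is the degree bound for the dependency graph plus the off-the-shelf Hajnal--Szemer\'edi theorem. Your approach is closer in spirit to later Toeplitz RIP analyses and would avoid the appeal to equitable colouring, but your central claim---that after grouping by block-row shift the cross terms become sums of \emph{independent} products---is precisely the delicate point and is not obviously correct as stated: pairs $(i,j)$ sharing the same block-shift can still share underlying block variables across block-rows, so independence within a shift-group must be argued more carefully (or the grouping chosen differently). The paper's row-partition argument sidesteps this entirely.
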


The above theorem gives the requirement for and probability of exact
reconstruction of a $3m$-sparse signal $x$ from a measurement $y$ if
Toeplitz block matrices are used. In particular it says, that if the
number of blocks ($l$) in one column of $\Phi$ does not exceed a
certain value depending only on the sparsity of the signal $x$, the
probability of perfect reconstruction is greater and the number of
required measurements is smaller than if $l$ is not bounded in this
way.
\par
As noted in \cite{Nowak, Tropp}, Toeplitz matrices naturally arise
in one-dimensional single-channel filtering applications where the
matrix elements are filter coefficients. Similarly, the Toeplitz
block matrices defined in (\ref{eqn:TB}) naturally arise in
one-dimensional multichannel sampling applications where the length
of the filter is at least $l$ points larger than that of the input
signal. The conventional multichannel sampling theorem states that
the sampling rate reduction over the single channel system cannot
exceed the number of channels for exact recovery. While Theorem
\ref{thm: toepl} suggests that multichannel systems with IID random
filters might be able to reduce the sampling rate by a factor higher
than the number of channels.
\par
We remark, that for other block matrices with similar structures,
the result in Theorem \ref{thm: toepl} also holds (see IV).

\section{Proof of Main Result}
Let $T\subset \{1,2,\ldots,N\}$. Denote by $\Phi_{T,i}$ the $i$-th
row of the matrix $\Phi_T$ obtained by retaining only those columns
of $\Phi$ corresponding to the elements in $T$, and let
$\Phi_{T,i}\cap\Phi_j$ denote the set of random variables common to
the $i$-th row of $\Phi_T$ and the $j$-th block of $\Phi$.

We note that, if (\ref{RIP}) holds for a set
$T\subset\{1,2,\ldots,N\}$, then it also holds for any
$\tilde{T}\subset T$. To prove that Toeplitz IID block matrices
satisfy RIP with high probability, it is therefore enough to
consider only those sets $T$ where $|T|=3m$.
\par
\begin{lemma}\label{lemma:max_dep} Define the sets $D_{T,i}$ by
$$D_{T,i}=\{j\in\{1,2,\ldots,n\}:
\Phi_{T,j}\text{ is stochastically dependent on } \Phi_{T,i}, j\neq
i \}.$$
\begin{itemize}
\item[(i)] If $T$ satisfies $|T|<\frac{1+\sqrt{1+4l}}{2}$, then
$|D_{T,i}|\le|T|(|T|-1)\le l-1.$
\item[(ii)] If $T$ satisfies $|T|\ge\frac{1+\sqrt{1+4l}}{2}$, then
$|D_{T,i}|\le l-1.$
\end{itemize}

\end{lemma}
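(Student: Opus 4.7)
My plan is to characterize stochastic dependence between two rows of $\Phi_T$ in terms of a combinatorial condition on the column-index set $T$, then derive two upper bounds on the number of dependent rows and take the smaller one.

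First I would parametrize row $i$ of $\Phi$ by a pair $(p_i, s_i)$, where $s_i \in \{1, \ldots, l\}$ is the block-row index and $p_i \in \{1, \ldots, d\}$ is the offset within that block-row, and each column index $t \in \{1, \ldots, N\}$ as $t = (c_t - 1)e + q_t$ with $c_t \in \{1, \ldots, k\}$ the block-column and $q_t \in \{1, \ldots, e\}$ the within-block offset. With this indexing, the $(i,t)$ entry of $\Phi$ equals the $(p_i, q_t)$ entry of the block $\Phi_{k + s_i - c_t}$. Since the blocks are mutually independent and each block has i.i.d.\ entries, two such entries are the same random variable iff they live in the same block at the same position, that is iff $p_i = p_j$, $q_t = q_{t'}$, and $s_i - s_j = c_t - c_{t'}$.

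Restricting to columns in $T$ yields the key structural fact: row $j \in D_{T,i}$ iff $p_j = p_i$ and the nonzero difference $s_i - s_j$ can be written as $c_t - c_{t'}$ for some pair of distinct $t, t' \in T$ with $q_t = q_{t'}$. The offset equality $q_t = q_{t'}$ is automatic once $c_t - c_{t'} = s_i - s_j$, because $e$ then divides $t - t'$, forcing $q_t \equiv q_{t'} \pmod e$ and hence $q_t = q_{t'}$. Distinct admissible values of $s_j$ therefore require distinct values of $c_t - c_{t'}$, and so $|D_{T,i}|$ is bounded by the number of ordered pairs of distinct elements of $T$, giving $|D_{T,i}| \le |T|(|T|-1)$. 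On the other hand $s_j$ ranges over $\{1, \ldots, l\} \setminus \{s_i\}$, so trivially $|D_{T,i}| \le l - 1$. Combining, $|D_{T,i}| \le \min\bigl(|T|(|T|-1),\, l - 1\bigr)$.

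The two cases of the lemma then follow from the elementary observation that $|T|(|T|-1) \le l - 1$ is equivalent to $|T| < \tfrac{1 + \sqrt{1+4l}}{2}$: in case (i) the bound $|T|(|T|-1)$ is active and automatically dominated by $l-1$; in case (ii) the bound reduces to $l-1$. The only subtle point in the argument is the structural step, where one must set up the block/offset decomposition carefully so that the ``shared random variable'' relation collapses to the single linear identity $s_i - s_j = c_t - c_{t'}$; once the explicit formula for the $(i,t)$ entry of $\Phi$ is written out, everything else is routine combinatorial counting.
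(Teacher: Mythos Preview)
Your coordinate parametrization is a clean alternative to the paper's block-by-block counting, and both bounds $|D_{T,i}|\le |T|(|T|-1)$ and $|D_{T,i}|\le l-1$ drop out correctly from it. The paper instead groups the columns of $T$ according to which block $\Phi_{t_s}$ they fall in, argues that the entries of $\Phi_{T,i}$ coming from a single block can be matched by at most $|T|-r_{t_s}$ other rows, and sums over blocks; your version, by reducing dependence to the single identity $s_i-s_j=c_t-c_{t'}$ (together with $p_j=p_i$), makes the counting more transparent and also gives a crisper reason for the trivial bound $l-1$ (namely, $p_j=p_i$ pins down one row per block-row, and $s_j\ne s_i$).

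There is, however, one sentence that is wrong and should simply be deleted: the claim that ``the offset equality $q_t=q_{t'}$ is automatic once $c_t-c_{t'}=s_i-s_j$, because $e$ then divides $t-t'$'' is circular. Since $t-t'=(c_t-c_{t'})e+(q_t-q_{t'})$, divisibility of $t-t'$ by $e$ is \emph{equivalent} to $q_t=q_{t'}$, not a consequence of the block-index identity; and in fact one can easily have $c_t-c_{t'}=s_i-s_j$ with $q_t\ne q_{t'}$. Fortunately this claim plays no role in the upper bound: even keeping the constraint $q_t=q_{t'}$, the number of admissible nonzero values of $s_i-s_j$ is bounded by the number of ordered pairs of distinct elements of $T$, which is $|T|(|T|-1)$. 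So just drop that sentence (and the word ``therefore'' in the next one) and your argument stands.
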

\begin{proof}
Fix $\Phi_{T,i}$. $T$ defines a sequence $\displaystyle
\{r_{t_s}\}_{s=1}^{k}$, where $r_{t_s}$ is the number of columns
from block $\Phi_{t_s}$ in $T$. Thus $\sum_{s=1}^k r_{t_s}=|T|$.
Consider the number of rows that have dependency with the elements
in $\Phi_{t_s}\cap \Phi_{T,i}$. Since all elements inside a single
block are independent, there can be no dependencies within one
block. Moreover, because of the structure of the matrix $\Phi$,
there can be at most
$$\left\{\begin{array}{cc}
0 & \text{if } \Phi_{t_s}\cap \Phi_{T,i}=\emptyset\\
|T|-r_{t_s} & \text{if } \Phi_{t_s}\cap \Phi_{T,i}\neq\emptyset
\end{array}\right.$$
 rows outside the block $\Phi_{t_s}$
that depend on any element in $\Phi_{t_s}\cap \Phi_{T,i}$.\\
(i) If $T$ satisfies $|T|<\frac{1+\sqrt{1+4l}}{2}$, i.e. if
$l>|T|(|T|-1)$, these rows may be distinct, and we have
\begin{align*}
|D_{T,i}|&\le\sum_{\{t_s, s\in\{1,2,\ldots,k\}:\Phi_{t_s}\cap
\Phi_{T,i}\neq\emptyset\}} (|T|-r_{t_s})\\
&\le \sum_{t\in T} (|T|-1)=|T|(|T|-1)\le l-1
\end{align*}
dependent rows.\\
(ii) If $T$ satisfies $|T|\ge \frac{1+\sqrt{1+4l}}{2}$, i.e. if
$|T|(|T|-1)\ge l$, then $|D_{T,i}|$ is upper bounded by the number
of blocks, so $|D_{T,i}|\le l-1$.

\end{proof}
In \cite{Achl} it has been shown that for given $n$, $N$, and
$T\subset\{1,2,\ldots,N\}$ with $|T|\le m$, an IID matrix of size
$n\times N$ with entries drawn independently from one of the
distributions $P$ in (\ref{eqn:distr})\footnote{These matrices
consist of columns whose squared norm is equal to 1 in expectation.}
satisfies (\ref{RIP1}) with probability
\begin{equation}\label{eqn:iidcase}
\ge 1-e^{-f(n,m,\delta_{m})},
\end{equation}
where
\begin{equation}\label{eqn:f}
f(n,m,\delta_{m})=c_0n-m\ln(12/\delta_{m})-\ln(2).
\end{equation}
\par
Now consider a (truncated) Toeplitz block matrix
$\Phi\in\mathbb{R}^{n\times N}$ as in (\ref{eqn:TB}), where the
blocks $\{\Phi_i\}_{i=1}^{k+l-1}$ are such IID matrices
$\in\mathbb{R}^{d\times e}$ with entries drawn independently from
the same set of distributions as above.
\par
The following lemma gives an upper bound for the probability that a
matrix as in (\ref{eqn:TB}) with $1\le l\le n$ satisfies (\ref{RIP})
for any fixed subset $T$ with $|T|=3m$. Lemma \ref{lemma:T-l-small}
gives a tighter bound for the case $l>|T|(|T|-1)$.
\par
The proof of Lemma \ref{lemma:T} uses an argument similar to the one
in the proof of Lemma 1 in \cite{Nowak}.

\begin{lemma}\label{lemma:T}
For given $T\subset\{1,2,\ldots,N\}$ with $|T|=m$, and
$\delta_{m}\in(0,1)$, the Toeplitz block submatrix $\Phi_T$
satisfies (\ref{RIP}) with probability at least
\begin{equation*}
1-e^{-f(d,m,\delta_{m})+\ln(l)}.
\end{equation*}
\end{lemma}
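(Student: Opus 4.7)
The plan is to decompose $\Phi_T$ by block rows and apply the IID concentration result (\ref{eqn:iidcase}) separately to each block row, then take a union bound over the $l$ block rows. Write $\Phi_T$ as the vertical stack of $l$ submatrices $(B_j)_T$, where $B_j = [\Phi_{k+j-1} \mid \Phi_{k+j-2} \mid \cdots \mid \Phi_j] \in \mathbb{R}^{d\times N}$ is the $j$-th block row of $\Phi$, and $(B_j)_T$ denotes its restriction to the columns indexed by $T$. The key observation is that although different block rows share blocks (so they are not mutually independent), each individual $B_j$ is itself an IID matrix: the $k$ blocks making it up are drawn independently, and each has IID entries from the same distribution in (\ref{eqn:distr}). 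Hence $(B_j)_T$ is a $d\times m$ matrix with IID entries.

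Next, I would apply the IID RIP bound (\ref{eqn:iidcase}) to each $(B_j)_T$. One subtlety is that the entries have variance $1/n = 1/(ld)$ rather than the $1/d$ expected for a $d$-row IID matrix; equivalently, $\sqrt{l}\,(B_j)_T$ is a $d\times m$ IID matrix whose entries are distributed as in (\ref{eqn:distr}) with $n$ replaced by $d$. By (\ref{eqn:iidcase}), with probability at least $1-e^{-f(d,m,\delta_m)}$ one has $(1-\delta_m)\|z\|_2^2 \le \|\sqrt{l}\,(B_j)_T z\|_2^2 \le (1+\delta_m)\|z\|_2^2$ for all $z\in\mathbb{R}^{m}$, i.e.
\[
\frac{1-\delta_m}{l}\|z\|_2^2 \;\le\; \|(B_j)_T z\|_2^2 \;\le\; \frac{1+\delta_m}{l}\|z\|_2^2.
\]
A union bound over $j=1,\ldots,l$ (which requires only these marginal probabilities and does not need the $B_j$ to be mutually independent) shows that all $l$ of these two-sided inequalities hold simultaneously with probability at least $1 - l\,e^{-f(d,m,\delta_m)} = 1 - e^{-f(d,m,\delta_m)+\ln(l)}$. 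On this event, summing $\|\Phi_T z\|_2^2 = \sum_{j=1}^{l}\|(B_j)_T z\|_2^2$ yields $(1-\delta_m)\|z\|_2^2 \le \|\Phi_T z\|_2^2 \le (1+\delta_m)\|z\|_2^2$ for all $z\in\mathbb{R}^{m}$, establishing the lemma.

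The main conceptual hurdle is recognizing that the stochastic dependence between different block rows is irrelevant here: the marginal law of each $B_j$ is that of a fresh IID matrix (since the underlying blocks are independent IID matrices), so (\ref{eqn:iidcase}) applies to each $B_j$ on its own, and the crude factor of $l$ from the union bound is absorbed as an additive $\ln(l)$ in the exponent. No finer analysis of which rows are stochastically dependent on which is needed at this stage; the refinement afforded by Lemma~\ref{lemma:max_dep} comes into play only in the complementary regime $l > |T|(|T|-1)$, which will be handled separately in Lemma~\ref{lemma:T-l-small}.
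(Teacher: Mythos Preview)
Your proposal is correct and follows essentially the same argument as the paper: decompose $\Phi_T$ into its $l$ block rows, rescale each by $\sqrt{l}$ to recover the right normalization, apply the IID bound (\ref{eqn:iidcase}) to each $d\times m$ block, and then take a union bound over the $l$ blocks before summing. Your explicit remark that only the marginal law of each block row is needed (so inter-block-row dependence is harmless under the union bound) is a nice clarification that the paper leaves implicit.
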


\begin{proof}
We can write the matrix $\Phi_{T}$ as
\begin{equation}
\Phi_{T}=\left(
\begin{array}{c}
\Phi_T^1\\
\vdots\\
\Phi_T^l
\end{array}
\right),
\end{equation}
where the blocks $\Phi_T^i$ of size $d\times |T|$ are given by the
columns determined by $T$ in the $i$-th row of blocks
$(\Phi_{k+i-1}, \Phi_{k+i-2}, \ldots, \Phi_{i})$ in $\Phi$.
\par
Note that $\forall i\in\{1,2,\ldots,l\},\ \Phi_T^i$ is an IID matrix
with entries from one of the distributions in (\ref{eqn:distr}). If
we let $\tilde{\Phi}_T^i=\sqrt{l}\Phi_T^i$, then the matrices
$\tilde{\Phi}_T^i$ have columns whose squared norm is equal to 1 in
expectation and by (\ref{eqn:iidcase}) satisfy (\ref{RIP}), i.e.
\begin{eqnarray*}
(1-\delta_{m})\|z\|_2^2\le
\|\tilde{\Phi}_T^iz\|_2^2\le(1+\delta_{m})\|z\|_2^2, \\ \forall
z\in\mathbb{R}^{|T|},\ \forall i\in\{1,2,\ldots,l\},
\end{eqnarray*}
with probability at least
\begin{equation}\label{eqn:iid-est}
 1-e^{-f(d,m,\delta_{m})}.
\end{equation}
Now since
\begin{equation}
\|\Phi_Tz\|_2^2=\sum_{i=1}^{l}\|\Phi_T^iz\|_2^2=\sum_{i=1}^{l}\frac{1}{l}\|\tilde{\Phi}_T^iz\|_2^2
\end{equation}
and $\sum_{i=1}^{l}\frac{1}{l}=1$, we have
\begin{equation*}
(1-\delta_{m})\|z\|_2^2\le
\|\Phi_Tz\|_2^2\le(1+\delta_{m})\|z\|_2^2,\qquad \forall
z\in\mathbb{R}^{|T|}.
\end{equation*}
In other words, the event $E_1=\{\tilde{\Phi}_T^i \text{ satisfies
(\ref{RIP}) }\forall i \}$ implies the event\\ $E_2=\{\Phi_T \text{
satisfies }(\ref{RIP})\}$. Consequently,
\begin{align*}
P(E_2)&=1-P(E_2^c)\ge 1-P(E_1^c) \notag\\
&\ge 1-\sum_{i=1}^l P(\{\tilde{\Phi}_T^i
\text{ does not satisfy }(\ref{RIP})\})\notag\\
&\ge 1-\sum_{i=1}^l e^{-f(d,m,\delta_{m})}\qquad \text{(by
(\ref{eqn:iid-est})) }\\
&=1-e^{-f(d,m,\delta_{m})+\ln(l)}.\label{eqn:iid}
\end{align*}
\end{proof}

\begin{lemma}\label{lemma:T-l-small}
For given $\ T\subset\{1,2,\ldots,N\}$ with $|T|=m$, and
$\delta_{m}\in(0,1)$, if $l>|T|(|T|-1)$, the Toeplitz block
submatrix $\Phi_T$ satisfies (\ref{RIP}) with probability at least
\begin{equation*}
1-e^{-f(\lfloor n/q\rfloor,m,\delta_{m})+\ln(q)},
\end{equation*}
where $q=|T|(|T|-1)+1$.
\end{lemma}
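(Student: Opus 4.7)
The plan is to mirror the proof of Lemma~\ref{lemma:T} but, using the sharper dependency bound of Lemma~\ref{lemma:max_dep}(i), to partition the rows of $\Phi_T$ into fewer and correspondingly much taller groups of mutually independent rows. Because $l>|T|(|T|-1)$, Lemma~\ref{lemma:max_dep}(i) asserts that every row of $\Phi_T$ is stochastically dependent on at most $q-1$ other rows, i.e.\ the dependency graph $G$ on $\{1,\ldots,n\}$ has maximum degree at most $q-1$. Two rows are non-adjacent in $G$ precisely when they share no underlying random variable, so any independent set of $G$ automatically consists of \emph{jointly} (not merely pairwise) independent rows.

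The crux of the argument is to produce a partition of $\{1,\ldots,n\}$ into $q$ independent sets $G_1,\ldots,G_q$ of $G$ with $|G_j|\ge\lfloor n/q\rfloor$ for every $j$; since $\Delta(G)\le q-1$, this is exactly an equitable proper $q$-coloring, whose existence is guaranteed by the Hajnal--Szemer\'edi theorem. Alternatively, the translation symmetry of the Toeplitz-block dependency pattern allows one to construct such a partition by hand, e.g.\ by taking a balanced coloring of the block-row dependency graph on $\{1,\ldots,l\}$ and then cyclically shifting it across the $d$ scalar rows inside each block row so as to even out the residue. This balancing is the main obstacle: a plain greedy $q$-coloring is easily unbalanced and would force $\lfloor n/q\rfloor$ in the conclusion to be replaced by the (possibly much smaller) minimum class size.

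Once the partition is in hand, denote by $\Psi_j$ the submatrix of $\Phi_T$ indexed by $G_j$. By construction $\Psi_j$ is an i.i.d.\ matrix of size $|G_j|\times m$ whose entries are drawn from one of the distributions in (\ref{eqn:distr}), so $\widetilde{\Psi}_j:=\sqrt{n/|G_j|}\,\Psi_j$ is an i.i.d.\ matrix whose columns have unit expected squared norm, and by (\ref{eqn:iidcase}) it satisfies (\ref{RIP1}) with probability at least $1-e^{-f(|G_j|,m,\delta_m)}\ge 1-e^{-f(\lfloor n/q\rfloor,m,\delta_m)}$. A union bound over the $q$ classes gives that, with probability at least $1-q\,e^{-f(\lfloor n/q\rfloor,m,\delta_m)}=1-e^{-f(\lfloor n/q\rfloor,m,\delta_m)+\ln q}$, all $\widetilde{\Psi}_j$ satisfy (\ref{RIP1}) simultaneously. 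On that event, exactly as in Lemma~\ref{lemma:T}, $\|\Phi_Tz\|_2^2=\sum_{j=1}^q\|\Psi_jz\|_2^2=\sum_{j=1}^q(|G_j|/n)\,\|\widetilde{\Psi}_jz\|_2^2$ is a convex combination (note $\sum_j|G_j|/n=1$) of quantities lying in $[(1-\delta_m)\|z\|_2^2,(1+\delta_m)\|z\|_2^2]$, hence lies in the same interval, which establishes (\ref{RIP}) for $\Phi_T$.
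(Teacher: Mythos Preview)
Your proposal is correct and follows essentially the same route as the paper: build the row-dependency graph, bound its maximum degree by $q-1$ via Lemma~\ref{lemma:max_dep}, invoke the Hajnal--Szemer\'edi theorem to obtain an equitable $q$-coloring, rescale each color-class submatrix, apply the i.i.d.\ estimate (\ref{eqn:iidcase}), and combine by union bound and convex combination. Your explicit remark that non-adjacency in $G$ yields \emph{joint} (not merely pairwise) independence of the corresponding rows is a point the paper leaves implicit but which is indeed needed for (\ref{eqn:iidcase}) to apply to each $\widetilde{\Psi}_j$.
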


\begin{proof}
Let $\Phi_{T,i}$ denote the $i$-th row of $\Phi_T$ and construct an
undirected dependency graph $G=(V,E)$ such that $V=\{1,2,\ldots,n\}$
and $$E=\{(i,i')\in V\times V:i\neq i', \Phi_{T,i}\text{ and
}\Phi_{T,i'}\text{ are dependent}\}.$$ By Lemma \ref{lemma:max_dep},
$\Phi_{T,i}$ can at most be dependent with $|T|(|T|-1)$ other rows.
Therefore, the maximum degree $\triangle$ of $G$ is given by
$\triangle\leq |T|(|T|-1)$, and using the Hajnal-Szemer\'{e}di
theorem on equitable coloring of graphs, we can partition $G$ using
$q=|T|(|T|-1)+1$ colors. Let $\{C_j\}_{j=1}^{q}$ be the different
color classes, then
\begin{equation*}
|C_j|=\lfloor n/q\rfloor \text{   or   } |C_j|=\lceil n/q \rceil.
\end{equation*}
 Now, let $\Phi_T^j$ be the $|C_j|\times|T|$ submatrix obtained
from $\Phi_T$ retaining the rows corresponding to the indices in
$C_j$ and define $\tilde{\Phi}_T^j=\sqrt{n/|C_j|}\Phi_T^j$. Then
\begin{equation}\label{eqn:norms-lsmall}
\forall z\in\mathbb{R}^{|T|},\qquad
\|\Phi_Tz\|_2^2=\sum_{j=1}^{q}\|\Phi_T^jz\|_2^2=\sum_{j=1}^{q}\frac{|C_j|}{n}\|\tilde{\Phi}_T^jz\|_2^2.
\end{equation}
Every $\tilde{\Phi}_T^j$ is a $|C_j|\times|T|$ IID matrix whose
columns have squared norm equal to 1 in expectation. By
(\ref{eqn:iidcase}), they satisfy (\ref{RIP}) with probability at
least
\begin{equation}\label{eqn:iid-est-lsmall}
1-e^{-f(|C_j|,m,\delta_{m})}\ge 1-e^{-f(\lfloor
n/q\rfloor,m,\delta_{m})}.
\end{equation}
Since $\sum_{j=1}^{q}\frac{|C_j|}{n}=1$, by
(\ref{eqn:norms-lsmall}), we have that if
\begin{equation*}
(1-\delta_{m})\|z\|_2^2\le
\|\tilde{\Phi}_T^jz\|_2^2\le(1+\delta_{m})\|z\|_2^2,\qquad \forall
z\in\mathbb{R}^{|T|},\ \forall j\in\{1,2,\ldots,q\}
\end{equation*}
then
\begin{equation*}
(1-\delta_{m})\|z\|_2^2\le
\|\Phi_Tz\|_2^2\le(1+\delta_{m})\|z\|_2^2,\qquad \forall
z\in\mathbb{R}^{|T|}.
\end{equation*}
In other words, the event $E_1=\{\tilde{\Phi}_T^j \text{ satisfies
(\ref{RIP}) for all }j \}$ implies the event $E_2=\{\Phi_T \text{
satisfies }(\ref{RIP})\}$. Consequently,
\begin{align*}
P(E_2)&=1-P(E_2^c)\ge1-P(E_1^c)\notag\\
&\ge 1-\sum_{j=1}^q P(\{\tilde{\Phi}_T^j
\text{ does not satisfy }(\ref{RIP})\})\notag\\
&\ge 1-\sum_{j=1}^q e^{-f(\lfloor
n/q\rfloor,m,\delta_{m})}\qquad\text{(by (\ref{eqn:iid-est-lsmall}))}\\
&=1-e^{-f(\lfloor
n/q\rfloor,m,\delta_{m})+\ln(q)}.\label{eqn:iid-lsmall}
\end{align*}
\end{proof}

\noindent {\bf Main result in Theorem \ref{thm: toepl}.}
\begin{proof}
(i) From (\ref{eqn:f}) and Lemma \ref{lemma:T} we have that $\Phi$
satisfies (\ref{RIP}) for any $T\subset\{1,2,\ldots,N\}$ such that
$|T|=3m$ with probability at least
\begin{equation}
1-e^{-c_0d+3m\ln(12/\delta_{3m})+\ln(2)+\ln(l)}.
\end{equation}
Since there are ${N\choose 3m}\le(eN/3m)^{3m}$ such subsets, using
Bonferroni's inequality (see e.g. \cite{Oxford}) yields that $\Phi$
satisfies RIP of order $3m$ with probability at least
\begin{equation}\label{eqn:exponential}
1-e^{-c_0n/l+3m[\ln(12/\delta_{3m})+\ln(N/3m)+1]+\ln(2)+\ln(l)}.
\end{equation}
Fix $c_2>0$ and pick $c_1=(3\ln((12/\delta_{3m}))+15)/(c_0-c_2)$.
Then for any $n\ge c_1 lm \ln(N/m)$, the exponent of $e$ in
(\ref{eqn:exponential}) is upper bounded by $-c_2n/l$:
\begin{align*}
&-\frac{c_0n}{l}+3m\left[\ln\left(\frac{12}{\delta_{3m}}\frac{N}{3m}\right)+1\right]+\ln(2l)\le
 -\frac{c_2n}{l}
 \notag \\
 \Leftrightarrow\ \ &3m\left[\ln\left(\frac{12}{\delta_{3m}}\frac{N}{3m}\right)+1\right]+\ln(2l)\le \frac{n}{l}(c_0-c_2)\notag\\
 \Leftrightarrow\ \ &\frac{3lm}{c_0-c_2}\left[\ln\left(\frac{12}{\delta_{3m}}\frac{N}{3m}\right)+1+\frac{\ln(2)}{3m}+\frac{\ln(l)}{3m}\right]\le
 n\notag\\
 \Leftrightarrow\ \ &\frac{3lm\ln\left(\frac{N}{m}\right)}{c_0-c_2}\left[\frac{\ln\left(\frac{12}{\delta_{3m}}\frac{N}{3m}\right)+1+\ln(2)+\ln(l)}{3m\ln\left(\frac{N}{m}\right)}\right]\le
 n\notag\\
  \Leftarrow\ \ &\frac{3lm\ln\left(\frac{N}{m}\right)}{c_0-c_2}\left[\ln\left(\frac{12}{\delta_{3m}}\right)+5\right]\le n\notag\\
 \Leftrightarrow\ \ &c_1 lm\ln\left(\frac{N}{m}\right)\le n\\
\end{align*}
\par
(ii) From (\ref{eqn:f}) and Lemma \ref{lemma:T-l-small} we have that
$\Phi$ satisfies (\ref{RIP}) for any $T\subset\{1,2,\ldots,N\}$ such
that $|T|=3m$ with probability at least
\begin{align}
&1-e^{-c_0\lfloor n/q \rfloor+3m\ln(12/\delta_{3m})+\ln(2)+\ln(q)}\notag\\
 \ge &1-e^{-c_0 n/9m^2 +3m\ln(12/\delta_{3m})+\ln(2)+\ln(9m^2)+c_0}.
\end{align}
Since there are ${N\choose 3m}\le(eN/3m)^{3m}$ such subsets, using
Bonferroni's inequality again yields that $\Phi$ satisfies RIP of
order $3m$ with probability at least
\begin{equation}\label{eqn:exponential-l-small}
1-e^{-c_0k/9m^2+3m[\ln(12/\delta_{3m})+\ln(N/3m)+1]+\ln(2)+\ln(9m^2)+c_0}.
\end{equation}
Now fix $c_2>0$ and pick $c_1>27c_3/(c_0-9c_2)$, where
$c_3=\ln(12/\delta_{3m})+\ln(2)+c_0+4$. Then, for any $n\ge c_1
m^3\ln(N/m)$, the exponent of $e$ in (\ref{eqn:exponential-l-small})
is upper bounded by $-c_2n/m^2$. This completes the proof of the
theorem.
\end{proof}

\begin{remark}
If $l=1$, then $\Phi$ is an IID matrix, and Theorem \ref{thm: toepl}
lower bounds the probability of $\Phi$ satisfying RIP of order $3m$
by $1-e^{-c_2n}$, which recovers the bound obtained in
\cite{Baraniuk}.
\end{remark}
\begin{remark} \label{remark1}
As long as $l\le 3m(3m-1)$, a matrix $\Phi$ as in (\ref{eqn:TB})
satisfies RIP of order $3m$ with probability $1-e^{-c_2n/l}\ge
1-e^{c_2'n/m^2}$, which is the bound given in \cite{Nowak}, since
\begin{equation}
-c_2n/l\le-c_2n/(9m^2-3m)\le-c_2n/9m^2=-c_2'n/m^2.
\end{equation}
\end{remark}

\section{Other Block Matrices}
\subsection{Circular matrices}
The above consideration can be applied to (truncated) circulant
block matrices of the form
\begin{equation}
\label{eqn:CB} \Phi=\left( \begin{array}{ccccc}
\Phi_k \ \ & \Phi_{k-1} \ \ & \ldots \ \ & \Phi_2\ \ & \Phi_1 \\
\Phi_{1} \ \ & \Phi_{k} \ \ & \ldots \ \ & \Phi_3\ \ & \Phi_2 \\
\vdots \ \ & \vdots \ \ & \ddots \ \ & \ddots\ \ & \vdots \\
\Phi_{l-1} \ \ & \Phi_{l-2} \ \ & \ldots \ \ &\ldots\ \ & \Phi_l
\end{array} \right)\in \mathbb{R}^{n\times N},
\end{equation}
where the blocks $\Phi_i$ are all IID matrices.
\par
Similar to (\ref{eqn:TB}), the circulant matrices in (\ref{eqn:CB})
also represent the system equation matrices in multichannel
sampling, but the convolution is a circular one.  They usually arise
in applications where convolutions are implemented by
multiplications in Fourier domain.
\par
Before we present the theorem for this type of matrices, we first
comment on the maximum number of stochastically dependent rows in a
(truncated) circulant matrix of the form \begin{align}\label{eqn:C}
A=\left(
\begin{array}{ccccc}
a_q \ \ & a_{q-1} \ \ & \ldots \ \ & a_2\ \ & a_1 \\
a_{1} \ \ & a_{q} \ \ & \ldots \ \ & a_3\ \ & a_2 \\
\vdots \ \ & \vdots \ \ & \ddots \ \ & \ddots\ \ & \vdots \\
a_{p-1} \ \ & a_{p-2} \ \ & \ldots \ \ &\ldots\ \ & a_p
\end{array} \right)\in \mathbb{R}^{p\times q}.
\end{align}
Again, we denote by $A_{T,i}$ the $i$-th row of the matrix $A_T$,
which is obtained by retaining only those columns of $A$
corresponding to $T\subset\{1,2,\ldots,N\}$.
\begin{lemma}\label{lemma:max_dep_circ} Define the sets $D_{T,i}$ by
$D_{T,i}=\{j\in\{1,2,\ldots,p\}: A_{T,j}$\text{ is stochastically
dependent on } $A_{T,i}, j\neq i \}.$ Then $D_{T,i}$ has cardinality
at most $|T|(|T|-1)$.
\end{lemma}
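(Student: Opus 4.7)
The plan is to exploit the circulant structure to translate row-dependence into a question about a difference set modulo $q$. First I would fix the indexing convention: by inspection of (\ref{eqn:C}), the entry of $A$ in row $i$ and column $c$ equals $a_{\sigma(i,c)}$, where $\sigma(i,c)\equiv i-c\pmod{q}$ (the precise representative is irrelevant). Consequently, two entries $A_{i,c}$ and $A_{j,c'}$ involve the same random variable if and only if $i-j\equiv c-c'\pmod{q}$.

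Next I would use this to characterize $D_{T,i}$. The row $A_{T,j}$ is stochastically dependent on $A_{T,i}$ exactly when some entry in one shares its random variable with some entry in the other, i.e.\ when there exist $c,c'\in T$ with $i-j\equiv c-c'\pmod{q}$. Let
\begin{equation*}
\Delta_T=\{c-c'\bmod q:\ c,c'\in T,\ c\neq c'\}
\end{equation*}
be the nonzero difference set of $T$ modulo $q$. Because the matrix is a truncation with $p\le q$ and $i\neq j$ in $\{1,\ldots,p\}$, we have $i\not\equiv j\pmod q$, so the case $c=c'$ is automatically excluded. Hence $j\in D_{T,i}$ if and only if $(i-j)\bmod q\in\Delta_T$.

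From here the counting is immediate. The set $\Delta_T$ has cardinality at most $|T|(|T|-1)$, being the image of the map $(c,c')\mapsto c-c'\bmod q$ on ordered pairs with $c\neq c'$. For each fixed $d\in\Delta_T$, the congruence $j\equiv i-d\pmod q$ has at most one solution $j\in\{1,\ldots,p\}$, again because $p\le q$. Combining the two bounds gives
\begin{equation*}
|D_{T,i}|\ \le\ |\Delta_T|\ \le\ |T|(|T|-1),
\end{equation*}
which is the desired estimate.

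The only subtle point is being explicit about the truncation hypothesis $p\le q$ so that distinct indices in $\{1,\ldots,p\}$ correspond to distinct residues modulo $q$; once that is recorded, the lemma reduces to a one-line counting argument on the difference set of $T$, and no probabilistic estimate beyond the definition of stochastic dependence of IID variables is needed.
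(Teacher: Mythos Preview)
Your argument is correct and rests on the same combinatorial observation as the paper: two rows of the (truncated) circulant share a random variable precisely when their index difference modulo $q$ lies in the nonzero difference set of $T$, and that set has size at most $|T|(|T|-1)$. The paper packages this differently: it first reduces to the square case $p=q$, encodes $T$ by the indicator vector $\mathbf{t}\in\{0,1\}^q$, builds the auxiliary matrix of cyclic right-shifts of $\mathbf{t}$, and then reads off the dependent rows as those whose restriction to $T$ has Hamming distance less than $|T|$ from the all-ones row. Your direct formulation via $\Delta_T$ avoids the auxiliary matrix and the Hamming-distance detour, and your handling of the truncation (using $p\le q$ to guarantee at most one solution $j$ per residue) is slightly cleaner than the paper's reduction to the square case. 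Both routes are equally short once the key congruence $i-j\equiv c-c'\pmod q$ is identified; yours just states it more explicitly.
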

\begin{proof}
Note first, that an upper bound for the case $p=q$ clearly upper
bounds the case where $p<q$. We may therefore assume that $p=q$ and
$A$ is a square circulant matrix. Then the number of rows
stochastically dependent on $A_{T,i}$ is independent of $i$ and we
can, w.l.o.g., assume that $i=1$. Let $\textbf{t}\in\{0,1\}^q$ be a
$q$-tuple defined by
\begin{equation*}
\textbf{t}_j=\left\{
\begin{array}{cc}
0 & \text{if }j\not\in T\\
1 & \text{if }j\in T \end{array} \right.,\quad\mbox{j=1,\ldots,q},
\end{equation*}
and consider the matrix
\begin{equation}\label{A-tilde}
\tilde{A}= \left(
\begin{array}{c}
\mathbf{t} \\
\sigma(\mathbf{t})\\
 \hdots \\
  \sigma^{q-1}(\mathbf{t})
\end{array}
\right) \in \mathbb{R}^{q\times q},
\end{equation}
where $\sigma:\{0,1\}^q\rightarrow \{0,1\}^q$ defines the
right-shift
$(\mathbf{t}_1,\ldots,\mathbf{t}_{q-1},\mathbf{t}_q)\rightarrow(\mathbf{t}_q,\mathbf{t}_1,\ldots,\mathbf{t}_{q-1})$.
Denote by $\tilde{A}_T$ the matrix obtained by retaining only those
columns of $\tilde{A}$ corresponding to $T\subset\{1,2,\ldots,q\}$.
It is now easy to see that
\begin{align*}
|D_{T,i}|&=|\{\tilde{A}_{T,i},i\in\{2,\ldots,q\}\ :\
h(\tilde{A}_{T,1},\tilde{A}_{T,i})<|T|\}|\\
&\le \{\#\text{ of ones in }\textbf{t}\}\cdot(\{\#\text{ of ones in
}\textbf{t}\}-1)\\
&=|T|(|T|-1),
\end{align*}
where $h:\{0,1\}^{q}\times\{0,1\}^{q}\rightarrow\mathbb{N}$ is the
Hamming distance defined by
\begin{equation*}
h(x,y) = |\{j\in\{1,2,\ldots,q\}\ :\ x_j \neq y_j\}|.
\end{equation*}
\end{proof}

The following theorem gives lower bounds for the probability that a
circulant block matrix as in \ref{eqn:C} satisfies the RIP of order
$3m$. Note that the bounds obtained are the same as in \ref{thm:
toepl} although the number of independent entries in $\Phi$ is
greater than before. This is due to the nature of the proof using
the number of stochastically dependent rows of $\Phi$ which is the
same for both Toeplitz and circulant matrices.

\begin{theorem}\label{thm:circ}
Let $\Phi$ be as in (\ref{eqn:CB}). Then there exist constants
$c_1,c_2>0$ depending only on $\delta_{3m}\in(0,1)$, such that:
\begin{enumerate}
 \item[(i)] If $l\le 3m(3m-1)$, then for any $n\ge c_1 lm \ln(N/m)$, $\Phi$ satisfies RIP of order $3m$ for every $\delta_{3m}\in(0,1)$ with probability at least
\begin{equation*}
1-e^{-c_2n/l}.
\end{equation*}
 \item[(ii)] If $l>3m(3m-1)$, then for any $n\ge c_1 m^3\ln(N/m)$, $\Phi$ satisfies RIP of order $3m$ for every $\delta_{3m}\in(0,1)$ with probability at least
\begin{equation*}
1-e^{-c_2n/m^2}.
\end{equation*}
\end{enumerate}
\end{theorem}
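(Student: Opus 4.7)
The plan is to transplant the proof of Theorem \ref{thm: toepl} onto the circulant block setting, making only those changes needed to accommodate the cyclic wrap-around. The key observation is that Lemma \ref{lemma:max_dep_circ} supplies the same bound $|T|(|T|-1)$ on the number of rows stochastically dependent on a given row of $\Phi_T$ that Lemma \ref{lemma:max_dep} supplied in the Toeplitz case. Since the probabilistic arguments of Lemmas \ref{lemma:T} and \ref{lemma:T-l-small} depend on the structure of $\Phi$ only through (a) the fact that each horizontal block-row is itself an IID matrix, and (b) this maximum-dependency bound, they carry over essentially verbatim, and it is only the combinatorial estimates that need to be reverified.

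For part (i), with $l\le 3m(3m-1)$, I would first observe that each of the $l$ horizontal block-rows of (\ref{eqn:CB}) is a cyclic permutation of $(\Phi_k,\Phi_{k-1},\ldots,\Phi_1)$, and hence, being a horizontal concatenation of independent IID blocks, is itself a $d\times N$ IID matrix with entries drawn from (\ref{eqn:distr}). Applying the rescaling $\tilde{\Phi}_T^i=\sqrt{l}\,\Phi_T^i$ and the IID bound (\ref{eqn:iidcase}) block-row by block-row, then taking a union bound over the $l$ block-rows, yields for fixed $T$ with $|T|=m$ the conclusion of Lemma \ref{lemma:T}: probability at least $1-e^{-f(d,m,\delta_m)+\ln(l)}$ that $\Phi_T$ satisfies (\ref{RIP}). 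Bonferroni over the $\binom{N}{3m}\le(eN/3m)^{3m}$ subsets of size $3m$ together with the choice of $c_1$ from the Toeplitz proof then delivers the probability $1-e^{-c_2 n/l}$ without modification.

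For part (ii), with $l>3m(3m-1)$, I would build the dependency graph $G=(V,E)$ on the $n$ rows of $\Phi_T$ exactly as in Lemma \ref{lemma:T-l-small}. By Lemma \ref{lemma:max_dep_circ} the maximum degree is at most $|T|(|T|-1)$, so the Hajnal-Szemer\'edi theorem produces an equitable coloring into $q=|T|(|T|-1)+1$ classes $C_j$ of size $\lfloor n/q\rfloor$ or $\lceil n/q\rceil$, each consisting of mutually independent rows. Rescaling the submatrix of rows indexed by $C_j$ by $\sqrt{n/|C_j|}$ gives an IID matrix to which (\ref{eqn:iidcase}) applies, and combining as in Lemma \ref{lemma:T-l-small} with Bonferroni over the subsets $T$ reproduces the bound (\ref{eqn:exponential-l-small}); the identical choice $c_1>27c_3/(c_0-9c_2)$ then forces the exponent down to $-c_2 n/m^2$.

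The only conceptually delicate point is confirming that the cyclic wrap-around in (\ref{eqn:CB}) does not spoil either the block-row IID structure used in part (i) or the degree bound used in part (ii). The former is automatic because any permutation of an independent family remains independent; the latter is exactly the content of Lemma \ref{lemma:max_dep_circ}, already proved via the Hamming-distance argument on cyclic shifts of the indicator vector $\mathbf{t}$. Once these two facts are in hand, the remaining work is a mechanical replay of the arithmetic from the proof of Theorem \ref{thm: toepl}, and no new obstacle arises.
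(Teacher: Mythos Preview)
Your proposal is correct and matches the paper's own approach: the paper's proof is a brief sketch stating that the row-dependency bound from Lemma~\ref{lemma:max_dep_circ} coincides with that of Lemma~\ref{lemma:max_dep}, after which the proof of Theorem~\ref{thm: toepl} applies verbatim. You have simply fleshed out that sketch by naming the two structure-dependent ingredients (block-row IID structure for part~(i), degree bound for part~(ii)) and checking that the cyclic wrap-around preserves both; the one minor slip is that Lemma~\ref{lemma:max_dep_circ} is stated for scalar circulant matrices, so strictly speaking you need (as the paper also notes) a short block-level adaptation in the spirit of Lemma~\ref{lemma:max_dep} before invoking it.
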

\begin{proof}
A similar argument as the one in the proof of Lemma
\ref{lemma:max_dep} shows that the upper bound for the maximum
number of rows stochastically dependent on any row of a (truncated)
circulant block matrix is the same as for the (truncated) Toeplitz
block matrices (use Lemma \ref{lemma:max_dep_circ}). Then the proof
of Theorem \ref{thm: toepl} directly applies to the setting at hand.
\end{proof}

\subsection{Circulant-circulant Matrices}
We also consider matrices that are (truncated) circulant block
matrices whose blocks are themselves circulant:
\begin{align}
\label{eqn:CB2} \Phi&=\left( \begin{array}{ccccc}
\Phi_k \ \ & \Phi_{k-1} \ \ & \ldots \ \ & \Phi_2\ \ & \Phi_1 \\
\Phi_{1} \ \ & \Phi_{k} \ \ & \ldots \ \ & \Phi_3\ \ & \Phi_2 \\
\vdots \ \ & \vdots \ \ & \ddots \ \ & \ddots\ \ & \vdots \\
\Phi_{l-1} \ \ & \Phi_{l-2} \ \ & \ldots \ \ &\ldots\ \ & \Phi_l
\end{array} \right)\in \mathbb{R}^{n\times N},\\
\label{eqn:CBC} \Phi_i&=\left( \begin{array}{ccccc}
\varphi_p^i \ \ & \varphi_{p-1}^i \ \ & \ldots \ \ & \varphi_2^i\ \ & \varphi_1^i \\
\varphi_{1}^i \ \ & \varphi_{p}^i \ \ & \ldots \ \ & \varphi_3^i\ \ & \varphi_2^i \\
\vdots \ \ & \vdots \ \ & \ddots \ \ & \ddots\ \ & \vdots \\
\varphi_{q-1}^i \ \ & \varphi_{q-2}^i \ \ & \ldots \ \ &\ldots\ \ &
\varphi_q^i
\end{array} \right)\in \mathbb{R}^{q\times p}.
\end{align}
Denote by $\tau:\{0,1\}^{kp}\rightarrow \{0,1\}^{kp}$ the
right-shift of \textit{blocks} $\Phi_i$ and by
$\sigma:\{0,1\}^{kp}\rightarrow \{0,1\}^{kp}$ the right-shift of
\textit{elements inside a block} $\Phi_i$, both by one position.
These matrices arise in two-dimentional imaging applications where
the independent elements are the coefficients of the point spread
function of the imaging system. Replacing (\ref{A-tilde}) in the
proof of Lemma \ref{lemma:max_dep_circ} by
\begin{equation*}
\bar{A}=\left(
\begin{array}{c}
\textbf{t}\\
\sigma^1\tau^0(\textbf{t})\\
\vdots\\
\sigma^{(i-1)(\text{mod }p)}\tau^{\lfloor\frac{i-1}{p}\rfloor}(\textbf{t})\\
\vdots\\
\sigma^{p-1}\tau^{l-1}(\textbf{t})
\end{array}
\right)\in \mathbb{R}^{lq\times kp},
\end{equation*}
readily yields the upper bound $|T|(|T|-1)$ for the number of rows
stochastically dependent on any one row of $\Phi$. Applying Lemma
\ref{lemma:T-l-small} and Theorem \ref{thm:circ} shows that the
probability for perfect reconstruction is no less than
$1-e^{-c_2n/m^2}$. This says that imaging systems with IID random
point spread functions can significantly reduce the number of
acquired samples, while still being able to reconstruct the original
sparse image if the above conditions hold.

\subsection{Circulant-circulant Block Matrices}
As a generalization of the matrices defined by (\ref{eqn:CB2}) and
(\ref{eqn:CBC}), the following matrices are also considered:
\begin{align}
\label{eqn:TCB} \Phi&=\left( \begin{array}{ccccc}
\Phi_{k_1} \ \ & \Phi_{{k_1}-1} \ \ & \ldots \ \ & \Phi_2\ \ & \Phi_1 \\
\Phi_{1} \ \ & \Phi_{{k_1}} \ \ & \ldots \ \ & \Phi_3\ \ & \Phi_2 \\
\vdots \ \ & \vdots \ \ & \ddots \ \ & \ddots\ \ & \vdots \\
\Phi_{{l_1}-1} \ \ & \Phi_{{l_1}-2} \ \ & \ldots \ \ &\ldots\ \ &
\Phi_{l_1}
\end{array} \right)\in \mathbb{R}^{n\times N},\\
\Phi_i&=\left( \begin{array}{ccccc}
\Upsilon_{k_2} \ \ & \Upsilon_{{k_2}-1} \ \ & \ldots \ \ & \Upsilon_2\ \ & \Upsilon_1 \\
\Upsilon_{1} \ \ & \Upsilon_{k_2} \ \ & \ldots \ \ & \Upsilon_3\ \ & \Upsilon_2 \\
\vdots \ \ & \vdots \ \ & \ddots \ \ & \ddots\ \ & \vdots \\
\Upsilon_{{l_2}-1} \ \ & \Upsilon_{{l_2}-2} \ \ & \ldots \ \
&\ldots\ \ & \Upsilon_{l_2}
\end{array} \right),\notag
\end{align}
where the blocks $\Upsilon_j$ are all IID matrices. These matrices
arise in multichannel two-dimensional imaging applications where the
number of rows in $\Upsilon_j$ corresponds to the $n/(l_1l_2)$
independent channels. We show next that these matrices are also good
compressed sensing matrices.
\begin{corollary}
Let $\Phi$ be as in (\ref{eqn:TCB}). Then there exist constants
$c_1,c_2>0$ depending only on $\delta_{3m}\in(0,1)$, such that:
\begin{enumerate}
 \item[(i)] If $l_1 l_2\le 3m(3m-1)$, then for any $n\ge c_1 l_1 l_2m \ln(N/m)$, $\Phi$ satisfies RIP of order $3m$ for every $\delta_{3m}\in(0,1)$ with probability at least
\begin{equation*}
1-e^{-c_2n/l_1 l_2}.
\end{equation*}
 \item[(ii)] If $l_1 l_2>3m(3m-1)$, then for any $n\ge c_1 m^3\ln(N/m)$, $\Phi$ satisfies RIP of order $3m$ for every $\delta_{3m}\in(0,1)$ with probability at least
\begin{equation*}
1-e^{-c_2n/m^2}.
\end{equation*}
\end{enumerate}
\end{corollary}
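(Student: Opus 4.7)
The plan is to reduce the corollary to Theorem \ref{thm:circ} by treating $l_1 l_2$ as the effective number of block-rows of the enclosing circulant structure. Writing $r=n/(l_1l_2)$ for the common row dimension of the inner IID blocks $\Upsilon_j$, one decomposes $\Phi_T$ into $l_1 l_2$ horizontal strips of size $r\times |T|$, each of which is itself an IID matrix with entries drawn from (\ref{eqn:distr}), since within a single $\Upsilon$-block-row of $\Phi$ every scalar entry lies either in a distinct $\Upsilon_j$-block or in a distinct position of the same $\Upsilon_j$-block. This decomposition makes Lemma \ref{lemma:T} applicable in case (i) with the substitutions $l\leftarrow l_1l_2$ and $d\leftarrow r$. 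For case (ii), Lemma \ref{lemma:T-l-small} is used instead, and its only nontrivial hypothesis is the max-dependency bound $|D_{T,i}|\le |T|(|T|-1)$.

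The main new ingredient is therefore the verification of that dependency bound, and the plan is to extend the Hamming-distance shift argument used in Lemma \ref{lemma:max_dep_circ} and in the discussion preceding (\ref{eqn:TCB}). I would index the rows of $\Phi$ by triples $(i_1,i_2,i_3)$, with $i_1$ the outer block-row, $i_2$ the inner block-row of $\Phi_{i_1}$, and $i_3$ the row index within the corresponding $\Upsilon$-block. The key observation is that two rows of $\Phi$ share a scalar random variable only if they agree in $i_3$ (because distinct rows of a fixed IID $\Upsilon_j$ consume disjoint entries) and the outer/inner circulant shifts of $\mathbf{t}\in\{0,1\}^N$ send some column of $T$ onto another column of $T$. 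The second condition is precisely the two-level shift condition already analysed for the circulant-circulant matrices (\ref{eqn:CB2})--(\ref{eqn:CBC}) via the shifts $\tau$ (block-shift) and $\sigma$ (within-block shift), where the count $|T|(|T|-1)$ was obtained. Since the first condition only removes potential dependencies, the total dependency degree stays at most $|T|(|T|-1)$.

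With these two ingredients in place, the proof concludes exactly as for Theorem \ref{thm: toepl}. In case (i), Lemma \ref{lemma:T} yields probability at least $1-e^{-f(r,3m,\delta_{3m})+\ln(l_1l_2)}$ for each fixed $T$; in case (ii), Lemma \ref{lemma:T-l-small} yields at least $1-e^{-f(\lfloor n/q\rfloor,3m,\delta_{3m})+\ln(q)}$ with $q=3m(3m-1)+1$. A Bonferroni union bound over the $\binom{N}{3m}\le (eN/3m)^{3m}$ choices of $T$, combined with the same selection of the constants $c_1,c_2$ as in the proof of Theorem \ref{thm: toepl} (taking the effective block-number to be $l_1l_2$ in case (i)), delivers the stated bounds $1-e^{-c_2n/(l_1l_2)}$ and $1-e^{-c_2n/m^2}$ respectively.

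The main obstacle is the dependency count: one has to keep track of three distinct sources of repeated randomness (the outer circulant, the inner circulant, and the reuse of each $\Upsilon_j$-block across block-rows), and verify that the additional row index $i_3$ introduced by the inner $\Upsilon$-blocks does not inflate the maximum degree of the dependency graph. Once the simple observation that distinct rows of a fixed $\Upsilon_j$ use disjoint scalars is recorded, the bookkeeping collapses to the two-shift counting already carried out in Lemma \ref{lemma:max_dep_circ} and section 4.2, and the remainder is routine.
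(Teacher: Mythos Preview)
Your proposal is correct and follows essentially the same route as the paper, which dispatches the corollary in one line by citing Lemma~\ref{lemma:max_dep_circ} and Theorem~\ref{thm:circ}. Your write-up simply unpacks that citation: the decomposition into $l_1l_2$ IID strips is exactly what makes Lemma~\ref{lemma:T} (hence part (i) of Theorem~\ref{thm:circ}) go through with $l\leftarrow l_1l_2$, and your two-level shift argument reproducing the $|T|(|T|-1)$ dependency bound is precisely the extension of Lemma~\ref{lemma:max_dep_circ} already sketched in Section~4.2, which feeds Lemma~\ref{lemma:T-l-small} for part~(ii).
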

This follows directly from Lemma \ref{lemma:max_dep_circ} and
Theorem \ref{thm:circ}.

\subsection{Deterministic Construction}
The CS matrices we have considered so far are based on randomized
constructions. However, in certain applications, deterministic
constructions are preferred. In \cite{DeVore} DeVore provided a
deterministic construction of CS matrices using polynomials over
finite fields. We will consider deterministic block matrices based
on DeVore's construction. Let us first recall the construction in
\cite{DeVore}.
\par
Consider the set $\mathbb{Z}_p\times \mathbb{Z}_p$, where
$\mathbb{Z}_p$ denotes the field of integers modulo $p$, $p$ a
prime. This set has $n:=p^2$ elements. Define
$P_r:=\{f\in\mathbb{Z}_p[x]:\deg(f)\le r\}$, $0<r<p$. This set has
$N:=p^{r+1}$ elements. For every $f\in P_r$, define the graph of $f$
by
$$\mathcal{G}(f)=\{(x,y)\in\mathbb{Z}_p \times
\mathbb{Z}_p:y=f(x), x\in\mathbb{Z}_p\}\subset\mathbb{Z}_p\times
\mathbb{Z}_p$$
 and consider the column vector $v(f)\in\{0,1\}^n$, indexed
by the elements of $\mathbb{Z}_p\times \mathbb{Z}_p$ ordered
lexicographically, given by
$$v(f):=(\textbf{1}_{(0,0)\in\mathcal{G}(f)},
\ldots,\textbf{1}_{(0,p-1)\in\mathcal{G}(f)},
\textbf{1}_{(1,0)\in\mathcal{G}(f)},
\ldots,\textbf{1}_{(p-1,p-1)\in\mathcal{G}(f)})^t,$$ where
$$\textbf{1}_{(a,b)\in\mathcal{G}(f)}=\left\{\begin{array}{cc}1 & \text{if }(a,b)\in\mathcal{G}(f)\\
0 & \text{if }(a,b)\not\in\mathcal{G}(f)\end{array}\right.$$

Construct the matrix $\Phi_0=(v(f_1),v(f_2),\ldots,v(f_N))$, where
the polynomials $f_i$ are ordered lexicographically with respect to
their coefficients. It was shown in \cite{DeVore}, that the matrix
$\Phi=\frac{1}{\sqrt{p}}\Phi_0$ satisfies RIP for any $m<p/r+1$ with
$\delta=(m-1)r/p\ (<1)$.
\par
Now consider
\begin{equation}
\label{eqn:DB} \Psi_0=\left( \begin{array}{ccccc}
\Psi_t \ \ & \Psi_{t-1} \ \ & \ldots \ \ & \Psi_2\ \ & \Psi_1 \\
\Psi_{t+1} \ \ & \Psi_{t} \ \ & \ldots \ \ & \Psi_3\ \ & \Psi_2 \\
\vdots \ \ & \vdots \ \ & \ddots \ \ & \ddots\ \ & \vdots \\
\Psi_{t+s-1} \ \ & \Psi_{t+s-2} \ \ & \ldots \ \ &\ldots\ \ & \Psi_s
\end{array} \right)\in \mathbb{R}^{sp^2\times tl},
\end{equation}
where $tl\le p^{r+1}$, and each block $\Psi\in\mathbb{R}^{p^2\times
l}$ is constructed from the first $tl$ vectors $v(f),\ f\in P_r$, as
above.
\begin{theorem}
The matrix $\Psi=\frac{1}{\sqrt{sp}}\Psi_0$ satisfies RIP with
$\delta=(m-1)r/p$ for any $m<p/r+1$.
\end{theorem}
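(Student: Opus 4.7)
The approach is to mirror the proof of Lemma~\ref{lemma:T}, with DeVore's deterministic RIP bound from \cite{DeVore} playing the role of the randomized estimate (\ref{eqn:iidcase}). First I would partition $\Psi_0$ into its $s$ horizontal block-rows
\begin{equation*}
\Psi_0^j := (\Psi_{t+j-1},\Psi_{t+j-2},\ldots,\Psi_j)\in\mathbb{R}^{p^2\times tl},\qquad j=1,\ldots,s,
\end{equation*}
observe that each normalized block-row $\tfrac{1}{\sqrt{p}}\Psi_0^j$ is itself a DeVore-type matrix and hence satisfies the desired RIP, and finally combine the $s$ block-rows via the convex combination identity
\begin{equation*}
\|\Psi z\|_2^2=\frac{1}{sp}\sum_{j=1}^{s}\|\Psi_0^j z\|_2^2=\sum_{j=1}^{s}\frac{1}{s}\,\bigl\|\tfrac{1}{\sqrt{p}}\Psi_0^j z\bigr\|_2^2,
\end{equation*}
using $\sum_{j=1}^{s}\tfrac{1}{s}=1$ to transfer a common RIP constant from the summands to $\Psi$.

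For the DeVore reduction, I would argue that each block-row $\Psi_0^j$ is a horizontal concatenation of $t$ blocks whose columns taken together are $tl$ of the vectors $v(f)$ used in DeVore's construction (possibly reordered relative to the lexicographic order in $\Phi_0$). Thus $\tfrac{1}{\sqrt{p}}\Psi_0^j$ agrees, up to column permutation, with the submatrix of $\Phi=\tfrac{1}{\sqrt{p}}\Phi_0$ obtained by keeping $tl$ of its columns. Because RIP of order $m$ is invariant under column permutation and inherited by any column submatrix, the main theorem of \cite{DeVore} gives
\begin{equation*}
(1-\delta)\|z\|_2^2\le\bigl\|\tfrac{1}{\sqrt{p}}\Psi_0^j z\bigr\|_2^2\le(1+\delta)\|z\|_2^2
\end{equation*}
for every $m$-sparse $z$ and every $j\in\{1,\ldots,s\}$, with $\delta=(m-1)r/p$ and $m<p/r+1$. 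Averaging over $j$ with the weights $\tfrac{1}{s}$ and invoking the identity above then yields $(1-\delta)\|z\|_2^2\le\|\Psi z\|_2^2\le(1+\delta)\|z\|_2^2$ for every $m$-sparse $z$, which is the RIP claimed for $\Psi$.

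The main point I expect to have to argue carefully is the combinatorial claim that no vector $v(f)$ is reused within a single block-row. Under the stated convention that the $t+s-1$ blocks are built from the first $tl$ vectors $v(f)$, the $t$ blocks appearing in any one row of $\Psi_0$ should together partition (a copy of) these $tl$ columns, which is exactly what is needed for the DeVore estimate to apply unchanged. If repetitions were allowed inside a row, one would have to track the additional inner-product contributions of the repeated columns, which would degrade the RIP constant; but under the intended construction this does not happen and the proof reduces to assembling DeVore's result row by row.
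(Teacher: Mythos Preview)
Your argument is correct, but it takes a different route from the paper. The paper does not reduce to DeVore's theorem as a black box; instead it repeats DeVore's Gram-matrix argument directly for the whole matrix $\Psi$: it forms $G_T=\Psi_T^t\Psi_T$, observes that the diagonal entries equal $1$ (each column of $\Psi_0$ has exactly $sp$ ones), bounds each off-diagonal entry by $g^T_{ij}=\frac{1}{sp}\sum_{x=1}^{s}\langle v_{x,i},v_{x,j}\rangle\le \frac{sr}{sp}=r/p$ using that two distinct polynomials in $P_r$ agree in at most $r$ points, and then writes $G_T=I+B_T$ with $\|B_T\|_1,\|B_T\|_\infty\le(m-1)r/p=\delta$ to get the eigenvalue bound $\|B_T\|_2\le\delta$. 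Your approach is more modular, mirroring Lemma~\ref{lemma:T}: apply DeVore's RIP to each normalized block-row and average. The paper's approach is more self-contained and avoids the (easy) observation that RIP is inherited by column submatrices and preserved under column permutation. Both approaches hinge on the same combinatorial fact---that within any single block-row the $tl$ columns correspond to \emph{distinct} polynomials---which you explicitly flag and the paper uses implicitly when asserting $\langle v_{x,i},v_{x,j}\rangle\le r$ for $i\neq j$.
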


\begin{proof}
As before, we only have to consider the case where $|T|=m$. Let
$T\subset\{1,2,\ldots,tl\}$ such that $|T|=m$, and let $\Psi_T$ be
the matrix obtained by retaining only those columns of $\Psi$
corresponding to the elements in $T$. Consider the matrix
$G_T=\Psi_T^t\Psi_T$. Since every column of $\Psi_0$ has exactly
$sp$ ones, the diagonal elements of $G_T$ are all one. An off
diagonal element of $G_T$ has the form
$g^T_{ij}=\sum_{x=1}^{s}\langle v_{x,i},v_{x,j}\rangle$, where
$i,j\in\{1,2,\ldots,m\}$, and $v_{x,i}$ denotes the vector
$(\Psi_{T,(x-1)n+1,i},\ldots,\Psi_{T,(x-1)n+n,i})^t\in\{0,1\}^{n}$
that represents some polynomial $f\in P_r$. Since the graphs of two
different polynomials in $P_r$ have at most $r$ elements in common,
$g^T_{ij}\le sr/sp=r/p$ for any $i\neq j$. Therefore, the sum of all
off diagonal elements in any row or column of $G_T$ is
$\le(m-1)r/p=\delta<1$ whenever $m<p/r+1$. We can, therefore, write
\begin{equation}
G_T=I+B_T,
\end{equation}
where $\|B_T\|_1\le\delta$ and $\|B_T\|_\infty\le\delta$. Since
$\|B_T\|_2^2\le\|B_T\|_1\|B_T\|_\infty$, we have that
$\|B_T\|_2\le\delta$ and so the spectral norms of $B_T$ and
$B_T^{-1}$ are $\le 1+\delta$ and $\le(1-\delta)^{-1}$,
respectively. This shows that $\Psi$ satisfies (\ref{RIP}).
\end{proof}

\section{Numerical Results}
To validate that the probability of exact recovery for Toeplitz
block CS matrices is high, the performance of Toeplitz block, IID,
and Toeplitz CS matrices is compared empirically. In our simulation,
a length n = 2048 signal with randomly placed m = 20 non-zero
entries drawn independently from the Gaussian distribution was
generated. Each such generated signal is sampled using $n\times N$
IID, Toeplitz and Toeplitz block matrices with entries drawn
independently from the Bernoulli distribution and reconstructed
using the log barrier solver from \cite{l1magic}. The experiment is
declared a success if the signal is exactly recovered, i.e., the
error is within the range of machine precision. The empirical
probability of success is determined by repeating the reconstruction
experiment 1000 times and calculating the fraction of success. This
empirical probability of success is plotted as a function of the
number of measurement samples n in Fig. 1. The simulation results
show, that in the vast majority of applications all Toeplitz block
matrices perform similar to IID matrices.

\begin{figure}
\begin{center}
\includegraphics[width=10cm, bb=0 0 561 420]{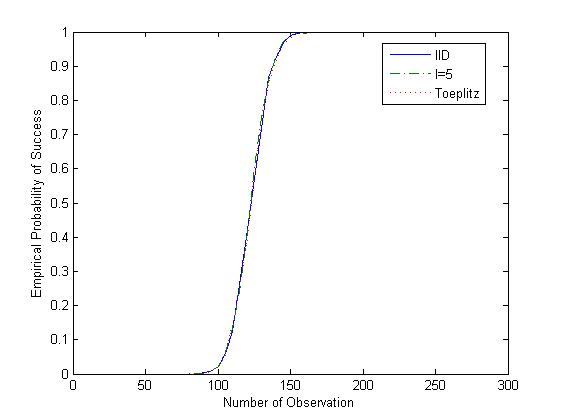}
\label{fig:comparison} \caption{Empirical probability of success
plotted against the number of observations for IID, Toeplitz block,
and Toeplitz matrices.}
\end{center}
\end{figure}

\section*{Acknowledgment}
The third author would like to acknowledge the support from IMA for
his participation in the short course ``Compressive Sampling and
Frontiers in Signal Processing".

\end{document}